\newtheorem{assumption}{\textbf{Assumption}}
\newtheorem{definition}{\textbf{Definition}}
\newtheorem{theorem}{\textbf{Theorem}}
\newtheorem{remark}{\textbf{Remark}}
\newtheorem{problem}{\textbf{Problem}}
\newcommand{\T}{^{\mbox{\tiny T}}}
\newcommand{\R}{\mathbb{R}}
\newcommand{\eps}{\varepsilon}
\let\leq\leqslant
\let\geq\geqslant
\newenvironment{proof}[1][Proof]%
{\par\noindent\textit{#1:\ }}%
{\hspace*{\fill} \rule{6pt}{6pt}}
\newenvironment{proof*}[1][Proof]%
{\par\noindent\textit{#1:\ }}{}
\DeclareMathOperator{\diag}{diag}
\DeclareMathOperator{\trace}{tr}
\DeclareMathOperator{\sat}{sat}
\DeclareMathOperator{\sgn}{sgn}
\newenvironment{system}[1]%
{\setlength{\arraycolsep}{0.5mm}\left\{ \; \begin{array}{#1}}%
    {\end{array} \right.}
\newenvironment{system*}[1]%
{\setlength{\arraycolsep}{0.5mm} \begin{array}{#1}}%
  {\end{array}}
\title{\LARGE \textbf{State Synchronization for Homogeneous Networks of Non-introspective Agents in Presence of Input Saturation-- A Scale-free Protocol Design}}
\author{Zhenwei Liu, Ali Saberi, Anton A. Stoorvogel, and Donya Nojavanzadeh%
%  \thanks{This work was supported by the Fundamental Research Funds
%    for the Central Universities (Grant No. N150404021) and the China
%    Scholarship Council.}%
  \thanks{Zhenwei Liu is with College of Information Science and
    Engineering, Northeastern University, Shenyang 110819,
    P. R. China {\tt\small jzlzwsy@gmail.com}} \thanks{Ali Saberi is with
    School of Electrical Engineering and Computer Science, Washington
    State University, Pullman, WA 99164, USA {\tt\small
      saberi@eecs.wsu.edu}} \thanks{Anton A. Stoorvogel is with
    Department of Electrical Engineering, Mathematics and Computer
    Science, University of Twente, P.O. Box 217, Enschede, The
    Netherlands {\tt\small A.A.Stoorvogel@utwente.nl}} \thanks{Donya Nojavanzadeh is with
    School of Electrical Engineering and Computer Science, Washington
    State University, Pullman, WA 99164, USA {\tt\small
    	donya.nojavanzadeh@wsu.edu}} }
\begin{document}

\maketitle
\thispagestyle{empty}
\pagestyle{empty}

\begin{abstract}
   This paper studies global and semi-global regulated state synchronization of homogeneous networks of non-introspective agents in presence of input saturation based on additional information exchange where the reference trajectory is given by a so-called exosystem which is assumed to be globally reachable. Our protocol design methodology does not need any knowledge of the directed network topology and the spectrum of the associated Laplacian matrix. Moreover, the proposed protocol is scalable and achieves synchronization for any arbitrary number of agents.
  \end{abstract}

\section{Introduction}

The synchronization problem of multi-agent systems (MAS) has attracted
substantial attention during the past decade, due to the wide potential for
applications in several areas such as automotive vehicle control,
satellites/robots formation, sensor networks, and so on. See for
instance the books \cite{ren-book} and \cite{wu-book} or the
survey paper \cite{saber-murray3}.

State synchronization inherently requires homogeneous networks
(i.e. agents which have identical dynamics). Therefore, in this paper we
focus on homogeneous networks. State synchronization based on diffusive \emph{full-state coupling} has been studied where the agent dynamics progress from single- and double-integrator dynamics (e.g.  \cite{saber-murray2}, \cite{ren}, \cite{ren-beard}) to more general dynamics (e.g. \cite{scardovi-sepulchre}, \cite{tuna1},
\cite{wieland-kim-allgower}). State synchronization based on
diffusive \emph{partial-state coupling} has also been considered, including static design (\cite{liu-zhang-saberi-stoorvogel-auto} and \cite{liu-zhang-saberi-stoorvogel-ejc}), dynamic design (\cite{kim-shim-back-seo}, \cite{seo-back-kim-shim-iet}, \cite{seo-shim-back}, \cite{su-huang-tac},
\cite{tuna3}), and the design with additional communication (\cite{chowdhury-khalil} and \cite{scardovi-sepulchre}). 

Meanwhile, it is worth to note that actuator saturation is pretty common and indeed is ubiquitous in engineering
applications. Some researchers have tried to establish the (semi) global state and output synchronization results for MAS in the presence of input saturation.
Global synchronization for neutrally stable
agents has
been studied by \cite{meng-zhao-lin-2013} (continuous-time) and
\cite{yang-meng-dimarogonas-johansson} (discrete-time) for either undirected or detailed balanced graph. Then, global synchronization via static protocols for MAS with partial state coupling and linear general dynamics is developed in \cite{liu-saberi-stoorvogel-zhang-ijrnc}. Reference \cite{li-xiang-wei} provides the design which can deal with networks that are not
detailed balanced but intrinsically requires the agents to be single integrators. Similar scenarios also appear in \cite{fu-wen-yu-ding} (finite-time consensus), and \cite{yi-yang-wu-johansson-auto} (event-triggered control). 

Semi-global leader-follower state synchronization has been studied in
\cite{su-chen-chen-2015} and \cite{su-chen-lam-lin} in the case of
full-state coupling. References \cite{su-chen},
\cite{su-jia-chen} and \cite{wang-su-wang-chen-2016} provide the semi global result via partial state coupling but they all
require extra communication and are introspective. Adaptive approach also is studied in \cite{chu-yuan-zhang} but the observer requires extra communication and is
introspective. A low gain design is introduced in \cite{shi-li-lin-ijrnc2018} for heterogeneous MAS with introspective agents and requires extra communication to track any trajectory from exosystem. The paper \cite{yang-stoorvogel-grip-saberi} considers non-introspective agents and requires extra communication for heterogeneous MAS, and \cite{zhang-chen-su} has similar design for discrete-time MAS. Then, \cite{takaba6} studied MAS with 
non-introspective agents and does not require extra communication, however it requires solution of a nonconvex
optimization problem to find a dynamic protocol. Recently, \cite{zhang2018semiglobal} studid the semi-global state synchronization of homogeneous networks for both continuous/discrete-time MASs with non-introspective agents with both full-state and partial-state coupling in the presence of input saturation.

In this paper, we deal with global and semi-global regulated state synchronization problems for MAS in presence of input saturation by tracking the trajectory of an exosystem. 
We design dynamic protocols by using additional information exchange for MAS with non-introspective agents and for both networks with full- and partial-state coupling.
The protocol design is scalable and does not need any information of communication network except connectivity. In other words, the proposed protocols work for any MAS with an arbitrary number of agents.

\subsection*{Notations and definitions}

Given a matrix $A\in \mathbb{R}^{m\times n}$, $A\T$ denotes the transpose of $A$ and $\|A\|$ denotes the induced 2-norm of $A$. For a vector $x\in \mathbb{R}^q$, $\|x\|$ denotes the 2-norm of $x$ and for a vector signal $v$, we denote the $\mathscr{L}_1$, $\mathscr{L}_2$, and $\mathscr{L}_\infty$ norm by $\| v \|_1$, $\| v \|_2$ and $\| v \|_\infty$ respectively. A square matrix $A$ is said to be Hurwitz stable if all its eigenvalues are in the open left half complex plane. We denote by
$\diag\{A_1,\ldots, A_N \}$, a block-diagonal matrix with
$A_1,\ldots,A_N$ as its diagonal elements. $A\otimes B$ depicts the
Kronecker product between $A$ and $B$. $I_n$ denotes the
$n$-dimensional identity matrix and $0_n$ denotes $n\times n$ zero
matrix; sometimes we drop the subscript if the dimension is clear from
the context.

To describe the information flow among the agents we associate a \emph{weighted graph} $\mathcal{G}$ to the communication network. The weighted graph $\mathcal{G}$ is defined by a triple
$(\mathcal{V}, \mathcal{E}, \mathcal{A})$ where
$\mathcal{V}=\{1,\ldots, N\}$ is a node set, $\mathcal{E}$ is a set of
pairs of nodes indicating connections among nodes, and
$\mathcal{A}=[a_{ij}]\in \mathbb{R}^{N\times N}$ is the weighted adjacency matrix with non negative elements $a_{ij}$. Each pair in $\mathcal{E}$ is called an \emph{edge}, where
$a_{ij}>0$ denotes an edge $(j,i)\in \mathcal{E}$ from node $j$ to
node $i$ with weight $a_{ij}$. Moreover, $a_{ij}=0$ if there is no
edge from node $j$ to node $i$. We assume there are no self-loops,
i.e.\ we have $a_{ii}=0$. A \emph{path} from node $i_1$ to $i_k$ is a
sequence of nodes $\{i_1,\ldots, i_k\}$ such that
$(i_j, i_{j+1})\in \mathcal{E}$ for $j=1,\ldots, k-1$. A \emph{directed tree} is a subgraph (subset
of nodes and edges) in which every node has exactly one parent node except for one node, called the \emph{root}, which has no parent node. The \emph{root set} is the set of root nodes. A \emph{directed spanning tree} is a subgraph which is
a directed tree containing all the nodes of the original graph. If a directed spanning tree exists, the root has a directed path to every other node in the tree.  

For a weighted graph $\mathcal{G}$, the matrix
$L=[\ell_{ij}]$ with
\[
\ell_{ij}=
\begin{system}{cl}
\sum_{k=1}^{N} a_{ik}, & i=j,\\
-a_{ij}, & i\neq j,
\end{system}
\]
is called the \emph{Laplacian matrix} associated with the graph
$\mathcal{G}$. The Laplacian matrix $L$ has all its eigenvalues in the
closed right half plane and at least one eigenvalue at zero associated
with right eigenvector $\textbf{1}$ \cite{royle-godsil}. Moreover, if the graph contains a directed spanning tree, the Laplacian matrix $L$ has a single eigenvalue at the origin and all other eigenvalues are located in the open right-half complex plane \cite{ren-book}.

\section{Problem formulation}

Consider a MAS consisting of $N$ identical dynamic agents with input saturation:
\begin{equation}\label{eq1}
\begin{cases}
\dot{x}_i=Ax_i+B\sigma(u_i),\\
y_i=Cx_i,
\end{cases}
\end{equation}
where $x_i\in\mathbb{R}^n$, $y_i\in\mathbb{R}^q$ and
$u_i\in\mathbb{R}^m$ are the state, output, and the input of agent 
$i=1,\ldots, N$, respectively. Meanwhile,
\[
\sigma(v)=\begin{pmatrix} 
\sat(v_1) \\ \sat(v_2) \\ \vdots \\ \sat(v_m)
\end{pmatrix}\quad\text{ where }\quad
v=\begin{pmatrix} 
v_1 \\ v_2 \\ \vdots \\ v_m
\end{pmatrix} \in \R^m
\]
with $\sat(w)$ is the standard saturation function:
\[
\sat(w)=\sgn(w)\min(1,|w|).
\]

\begin{assumption}\label{Aass}
	Assume agents are at most weakly unstable, namely, all eigenvalues of $A$ are in the closed left half plane. Moreover, let $(A, B, C)$ be stabilizable and detectable.
\end{assumption}

The network provides agent $i$ with the following information,
\begin{equation}\label{eq2}
\zeta_i=\sum_{j=1}^{N}a_{ij}(y_i-y_j),
\end{equation}
where $a_{ij}\geq 0$ and $a_{ii}=0$. This communication topology of
the network can be described by a weighted graph $\mathcal{G}$ associated with \eqref{eq2}, with
the $a_{ij}$ being the coefficients of the weighted adjacency matrix
$\mathcal{A}$. In terms of the coefficients of the associated
Laplacian matrix $L$, $\zeta_i$ can be rewritten as
\begin{equation}\label{zeta_l}
\zeta_i = \sum_{j=1}^{N}\ell_{ij}y_j.
\end{equation}
We refer to this as \emph{partial-state coupling} since only part of
the states are communicated over the network. When $C=I$, it means all states are communicated over the network and we call it \emph{full-state coupling}. Then, the original agents are expressed as
\begin{equation}\label{neq1}
\dot{x}_i=Ax_i+B\sigma(u_i)
\end{equation}
%and
%\begin{equation}\label{nsolu-cond}
%\dot{x}_r  = A x_r
%\end{equation}
and $\zeta_i$ is rewritten as
\[
\zeta_i = \sum_{j=1}^{N}\ell_{ij}x_j.
\]

Obviously, state synchronization is achieved if
\begin{equation}\label{synch_org}
\lim_{t\to \infty} (x_i-x_j)=0.\quad \text{for all } i,j \in {1,...,N}
\end{equation}

For homogeneous MAS such as in this paper, almost all papers considered
state synchronization without imposing requirements on the
synchronized trajectory. However, for heterogenous agents, it has been
shown in \cite{wieland-sepulchre-allgower,grip-saberi-stoorvogel3}
that we basically need to consider regulated state synchronization
where the objective of the agents is to ensure that their state
asymptotically tracks a reference trajectory generated by a so-called exosystem.
Although we consider homogeneous MAS, we will study regulated state synchronization in this paper.
%Although our paper considers homogeneous MAS, the adaptation required
%to handle the lack of information about the network introduces a
%certain level of heterogeneity. Therefore, we also consider regulated
%state synchronization in this paper.

The reference trajectory is generated by the following exosystem
\begin{equation}\label{solu-cond}
\begin{system*}{rl}
\dot{x}_r & = A x_r\\
y_r&=Cx_r.
\end{system*}	
\end{equation}
with $x_r\in\R^n$.  Our objective is that the agents achieve regulated
state synchronization, that is
\begin{equation}\label{synchro}
\lim_{t\to \infty} (x_i-x_r)=0,
\end{equation}
for all $i\in\{1,\ldots,N\}$. Clearly, we need some level of
communication between the exosystem and the agents.  We assume that a nonempty
subset $\mathscr{C}$ of the agents have access to their
own output relative to the output of the exosystem.  Specially, each
agent $i$ has access to the quantity
\begin{equation}\label{elp}
\psi_i=\iota_{i}(y_i-y_r),\qquad
\iota_i=
\begin{cases}
1, & i\in \mathscr{C},\\
0, & i\notin \mathscr{C}.
\end{cases}
\end{equation}
By combining this with \eqref{zeta_l}, we have the following information
exchange
\begin{equation}\label{zeta_l-n}
\bar{\zeta}_i = \sum_{j=1}^{N}a_{ij}(y_i-y_j)+\iota_{i}(y_i-y_r).
\end{equation}
Meanwhile, for full-state coupling case  \eqref{zeta_l-n} will change as
\begin{equation}\label{zeta_l-nn}
\bar{\zeta}_i = \sum_{j=1}^{N}a_{ij}(x_i-x_j)+\iota_{i}(x_i-x_r).
\end{equation}
To guarantee that each agent can achieve the required regulation, we
need to make sure that there exists a path to each node starting with node from the
set $\mathscr{C}$. Therefore, we define the following set of graphs.
\begin{definition}\label{def_rootset}
	Given a node set $\mathscr{C}$, we denote by $\mathbb{G}_{\mathscr{C}}^N$ the set of all graphs with $N$ nodes containing the node set $\mathscr{C}$, such that every node of the network graph $\mathcal{G}\in\mathbb{G}_\mathscr{C}^N$ is a member of a directed tree
	which has its root contained in the node set $\mathscr{C}$.
\end{definition}
\begin{remark}
	Note that Definition \ref{def_rootset} does not require necessarily the existence of directed spanning tree.
\end{remark}

In the following, we will refer to the node set $\mathscr{C}$ as root set in view of Definiton \ref{def_rootset}.
For any graph $\mathbb{G}_\mathscr{C}^N$, with the Laplacian matrix $L$, we define the expanded Laplacian matrix as: 
\[
\tilde{L}=L+diag\{\iota_i\}=[\tilde{\ell}_{ij}]_{N \times N}
\]
which is not a regular Laplacian matrix associated to the graph, since the sum of its rows need not be zero. We know that Definition \ref{def_rootset}, guarantees that all the eigenvalues of $\tilde{L}$, have positive real parts. In particular matrix $\tilde{L}$ is invertible.

In this paper, we also introduce an additional
information exchange among protocols. In particular, each agent 
$i=1,\ldots, N$ has access to additional information, denoted by
$\hat{\zeta}_i$, of the form
\begin{equation}\label{eqa1}
\hat{\zeta}_i=\sum_{j=1}^Na_{ij}(\xi_i-\xi_j)
\end{equation}
where $\xi_j\in\mathbb{R}^n$ is a variable produced internally by agent $j$ and to be defined in next sections.

Now, we formulate the following problem for global regulated state synchronization of a MAS.

\begin{problem}\label{prob4}
	Consider a MAS described by \eqref{eq1} and \eqref{eq2} and the
	associated exosystem \eqref{solu-cond}. Let a set of nodes
	$\mathscr{C}$ be given which defines the set
	$\mathbb{G}_{\mathscr{C}}^N$. Let the associated network
	communication be given by \eqref{zeta_l-n}. 
	
	The \textbf{scalable global regulated state synchronization problem with additional information exchange} of a MAS is to find, if possible, a dynamic protocol for each agent $i\in\{1,\hdots,N\}$, using only
	knowledge of agent model, i.e. $(A,B,C)$, of the form:
	\begin{equation}\label{protoco5}
	\begin{system}{cl}
	\dot{x}_{c,i}&={f}(x_{c,i},\bar{\zeta}_i,\hat{\zeta}_i),\\
	u_i&={g}(x_{c,i}),
	\end{system}
	\end{equation}
	where $\hat{\zeta}_i$ is defined in \eqref{eqa1} with $\xi_i=H_{c}x_{i,c}$, and $x_{c,i}\in\R^{n_c}$, such that regulated
	state synchronization \eqref{synchro} is achieved for any $N$ and any
	graph $\mathcal{G}\in \mathbb{G}_{\mathscr{C}}^N$, and for all initial conditions of the agents $x_i(0) \in \mathbb{R}^n$, all initial conditions of the exosystem $x_r(0) \in \mathbb{R}^n$, and all initial conditions of the protocols $x_{c,i}(0) \in \mathbb{R}^{n_c}$.
\end{problem}

Next, we adopt semi-global framework to achieve regulated state synchronization by utilizing only \emph{linear} protocols.

\begin{problem}\label{prob3}
	Consider a MAS described by \eqref{eq1} and \eqref{eq2} and the
	associated exosystem \eqref{solu-cond}. Let a set of nodes
	$\mathscr{C}$ be given which defines the set
	$\mathbb{G}_{\mathscr{C}}^N$ and let the associated network
	communication be
	given by \eqref{zeta_l-n}. 
	
	The \textbf{scalable semi-global regulated state synchronization problem with additional information exchange} of a MAS is to find, if possible, a parametrized linear dynamic protocol with parameter $\eps\in (0,1]$ for each agent $i\in\{1,\hdots,N\}$, using only
	knowledge of agent model, i.e. $(A,B,C)$, of the form:
	
%		\begin{equation}\label{protoco4}
%	\begin{system}{cl}
%	\dot{x}_{c,i}&={f}_{\eps}(x_{c,i},\bar{\zeta}_i(t),\hat{\zeta}_i(t)),\\
%	u_i(t)&={g}_{\eps}(x_{c,i}),
%	\end{system}
%	\end{equation}
	\begin{equation}\label{protoco4}
	\begin{system}{cl}
	\dot{x}_{c,i}&=A_{c}^\eps x_{c,i}+B_{c}^\eps u_i+C_{c}^\eps \bar{\zeta}_i+D_{c}^\eps \hat{\zeta}_i,\\
	u_i&=F_c^{\eps}x_{c,i},
	\end{system}
	\end{equation}
	 where $\hat{\zeta}_i$ is defined in \eqref{eqa1} with $\xi_i=E_{c}x_{i,c}$, and  $x_{c,i}\in\R^{n_c}$, such that, for any given arbitrarily large compact sets $\mathbb{S}_a\in\mathbb{R}^n$, $\mathbb{S}_e\in\mathbb{R}^n$ and $\mathbb{S}_c\in\mathbb{R}^{n_c}$,  and for any $N$ and any graph $\mathcal{G}\in \mathbb{G}_{\mathscr{C}}^N$, there exists an $\eps^*$ such that, for all $\eps\in(0,\eps^*]$ regulated state synchronization \eqref{synchro} is achieved for all initial conditions of the agents in the set $\mathbb{S}_a$, all initial conditions of the exosystem in the set $\mathbb{S}_e$, and all initial conditions of the protocols in the set $\mathbb{S}_c$.
\end{problem}

\begin{remark}
	In the case of full-state coupling, matrix $C=I$ and we refer to Problems \ref{prob3} and \ref{prob4} with $\bar{\zeta}_i$ as \eqref{zeta_l-nn}, scalable semi-global and global regulated state synchronization problems for MAS with full-state coupling.
\end{remark}

\section{Scalable global regulated state synchronization of mas in presence of input saturation}

In this section, we will consider the scalable global regulated state synchronization problem for a
MAS with input saturation via scheduling (adaptive) design for both networks with full and partial-state coupling.
\subsection{Full-state coupling}
In this case, the following nonlinear protocol is designed for each agent
$i\in\{1,\ldots,N\}$,
\begin{equation}\label{pscp5}
\begin{system}{cll}
\dot{\chi}_i &=& A\chi_i+Bu_i+\bar{\zeta}_i-\hat{\zeta}_i-\iota_i\chi_i \\
u_i &=& -B\T P_{\eps(\chi_i)}\chi_i,
\end{system}
\end{equation}
where $P_\rho$ is the unique solution of  
\begin{equation}\label{arespecial2}
A\T P_\rho+ P_\rho A -  P_\rho BB\T
P_\rho + \rho P_\rho = 0 .
\end{equation}
and $\eps(\chi_i)$ is defined as
\begin{equation}\label{boundeps}
\eps(\chi_i)=\max\{\rho\in(0,1]:\chi_i\T P_{\rho}\chi_i\trace B\T P_{\rho}B\leq 1\}.
\end{equation}
Note that \cite{zhou-duan-lin} implies that $P_\rho$ is increasing in $\rho$ while $P_\rho \to 0$ as $\rho\to 0$. The agents communicate $\xi_i$ which is chosen as $\xi_i=\chi_i$, therefore each agent has access to the following information:
\begin{equation}\label{info1}
\hat{\zeta}_i=\sum_{j=1}^Na_{ij}(\chi_i-\chi_j).
\end{equation}
while $\bar{\zeta}_i$ is defined by \eqref{zeta_l-nn}.

Then, the synchronization result based on adaptation is stated in
Theorem \ref{mainthm2}. 

\begin{theorem}\label{mainthm2}
	Consider a MAS described by \eqref{neq1} satisfying Assumption \ref{Aass}, and the associated exosystem
	\eqref{solu-cond}. Let a set of nodes $\mathscr{C}$ be given which
	defines the set $\mathbb{G}_{\mathscr{C}}^N$. Let the associated
	network communication be given by \eqref{zeta_l-nn}.
	
	 Then, the scalable global
	regulated state synchronization problem as stated in Problem
	\ref{prob4} is solvable. In particular, the adaptive nonlinear
	dynamic protocol \eqref{pscp5}, \eqref{arespecial2}, and \eqref{boundeps} solves the regulated state
	synchronization problem for any $N$ and any graph
	$\mathcal{G}\in\mathbb{G}_{\mathscr{C}}^N$. 
\end{theorem}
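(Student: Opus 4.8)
The plan is to pass to regulation-error coordinates and exploit a cascade structure in which the scheduling gain completely removes the saturation. First I would introduce, for each agent, the variable $\eta_i=x_i-x_r-\chi_i$, measuring the mismatch between the regulation error $x_i-x_r$ and the internal protocol state $\chi_i$. Using $\bar\zeta_i-\hat\zeta_i=\sum_j a_{ij}\big((x_i-\chi_i)-(x_j-\chi_j)\big)+\iota_i(x_i-x_r)$ and the identities $x_i-\chi_i=\eta_i+x_r$, $x_i-x_r=\eta_i+\chi_i$, a short computation gives
\[
\dot\eta_i=A\eta_i-\Big(\sum_{j=1}^N a_{ij}(\eta_i-\eta_j)+\iota_i\eta_i\Big)+B\big(\sigma(u_i)-u_i\big),
\]
where the term $\iota_i\chi_i$ appearing through $\bar\zeta_i-\hat\zeta_i$ cancels exactly against the $-\iota_i\chi_i$ term in \eqref{pscp5}. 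Stacking, this reads $\dot\eta=(I_N\otimes A-\tilde L\otimes I_n)\eta+(I_N\otimes B)(\sigma(u)-u)$, with coupling entirely through the expanded Laplacian $\tilde L$, and the residual signal $w_i:=\sum_j a_{ij}(\eta_i-\eta_j)+\iota_i\eta_i$ (the $i$-th block of $(\tilde L\otimes I_n)\eta$) is exactly what re-enters the $\chi_i$-equation, which simplifies to $\dot\chi_i=A\chi_i+Bu_i+w_i$.

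The crucial observation is that the scheduling gain keeps the input inside the saturation limits at all times. Writing $B=[b_1,\dots,b_m]$ and using the standard inequality $P_\rho b_k b_k\T P_\rho\preceq (b_k\T P_\rho b_k)P_\rho$ (a Cauchy--Schwarz bound in the $P_\rho$-inner product), I would bound $\|u_i\|^2=\sum_k(b_k\T P_{\eps(\chi_i)}\chi_i)^2\leq (\chi_i\T P_{\eps(\chi_i)}\chi_i)\trace(B\T P_{\eps(\chi_i)}B)\leq 1$, where the last step is precisely the defining constraint \eqref{boundeps}. Hence $|(u_i)_k|\le 1$ for every $k$, so $\sigma(u_i)=u_i$ for all $t$, and the saturation mismatch vanishes identically. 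Consequently the $\eta$-subsystem is the autonomous linear system $\dot\eta=(I_N\otimes A-\tilde L\otimes I_n)\eta$, decoupled from $\chi$.

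Next I would show this system is asymptotically stable. Its eigenvalues are $\mu-\lambda$ with $\mu\in\sigma(A)$ and $\lambda\in\sigma(\tilde L)$; by Assumption \ref{Aass} we have $\re\mu\leq 0$, while the root-set/directed-tree condition of Definition \ref{def_rootset} guarantees $\re\lambda>0$, so $\re(\mu-\lambda)<0$ for every pair and the matrix is Hurwitz. Therefore $\eta\to 0$ exponentially, and the driving signal $w_i$ in the $\chi_i$-dynamics decays to zero.

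It then remains to establish $\chi_i\to 0$ for $\dot\chi_i=(A-BB\T P_{\eps(\chi_i)})\chi_i+w_i$, a scheduled low-gain feedback driven by the vanishing input $w_i$; combined with $x_i-x_r=\eta_i+\chi_i$ this yields \eqref{synchro}. I expect this to be the main obstacle. The approach is a Lyapunov argument based on $V_i=\chi_i\T P_{\eps(\chi_i)}\chi_i$ together with the closed-loop ARE identity $(A-BB\T P_\rho)\T P_\rho+P_\rho(A-BB\T P_\rho)=-\rho P_\rho-P_\rho BB\T P_\rho$ and the monotonicity and limit properties $P_\rho\uparrow$, $P_\rho\to 0$ as $\rho\to 0$ from \cite{zhou-duan-lin}; the delicate points are accounting for the state dependence of $\eps(\chi_i)$ (the $\dot\eps$ contribution to $\dot V_i$), ruling out finite escape using that $A$ is only weakly unstable, and absorbing the decaying perturbation $w_i$ via a cascade/ISS estimate. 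Since the construction uses only $(A,B,C)$ and the local exchanges $\bar\zeta_i,\hat\zeta_i$, the argument holds verbatim for every $N$ and every $\mathcal G\in\mathbb G_{\mathscr C}^N$, giving the claimed scalable global regulated state synchronization.
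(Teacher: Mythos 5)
Your reduction is exactly the paper's: your $\eta_i$ is the paper's $e_i=\tilde{x}_i-\chi_i$, the cancellation of the $\iota_i\chi_i$ terms yielding $\dot{\eta}=(I\otimes A-\tilde{L}\otimes I)\eta$ is the same computation behind \eqref{newsystem2}, and the Hurwitz argument via $\re\mu\leq 0$ for $\mu\in\sigma(A)$ and $\re\lambda>0$ for $\lambda\in\sigma(\tilde{L})$ matches \eqref{boundapl}. Your saturation-avoidance step is in fact \emph{more} complete than the paper's: the bound $\|u_i\|^2\leq(\chi_i\T P_{\eps(\chi_i)}\chi_i)\trace(B\T P_{\eps(\chi_i)}B)\leq 1$ via $P_\rho b_kb_k\T P_\rho\preceq(b_k\T P_\rho b_k)P_\rho$ is a correct justification of what the paper merely asserts ``by construction.'' So everything up to the cascade is sound.

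The genuine gap is the last step, which you explicitly leave as a sketch: convergence of $\dot{\chi}_i=(A-BB\T P_{\eps(\chi_i)})\chi_i+w_i$ is where essentially all of the paper's analytic work lives, and the ``cascade/ISS estimate'' you invoke is not available off the shelf. The difficulty is circular: the decay rate of the scheduled system is governed by $\eps(\chi_i)$, which can collapse to zero if $V_i=\chi_i\T P_{\eps(\chi_i)}\chi_i$ grows under the perturbation, so one cannot absorb $w_i$ by a generic ISS bound without first excluding that collapse. The paper breaks the circle in two stages. First, using that $e_i$ and $\dot{e}_i$ decay exponentially and hence lie in $\mathscr{L}_1$ (integrability, not mere vanishing, is what gets used), together with a case split (when $V_i$ is increasing, $\eps_\alpha$ is non-increasing, so $P_{\eps_\alpha}$ is non-increasing and the scheduling term helps), it derives $\frac{dV_i}{dt}\leq(\beta_1+\beta_2)(V_i+1)^{1/2}$ with $\beta_1,\beta_2\in\mathscr{L}_1$ and integrates the comparison ODE to get a uniform bound on $V_i$; this is what pins $\eps_\alpha$ away from zero. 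Second, it controls the scheduling contribution by $\bigl|\chi_i\T\tfrac{dP_{\eps_\alpha}}{dt}\chi_i\bigr|\leq k\tfrac{dV_i}{dt}$ (citing \cite[Lemma 6.1]{saberi-hou-stoorvogel}), which after rearrangement gives $\frac{dV_i}{dt}\leq-\tilde{\alpha}V_i+\tilde{\beta}(V_i+1)^{1/2}$ with a \emph{constant} $\tilde{\alpha}>0$ and $\tilde{\beta}\in\mathscr{L}_1$, whence $V_i\to 0$. You correctly name the candidate Lyapunov function, the ARE identity, and the delicate points (the $\dot{\eps}$ term, finite escape, the decaying drive), but you supply no mechanism for stage one (boundedness of $V_i$ implying $\eps_\alpha$ bounded below), and without it the convergence argument does not close.
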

%
%To obtain our results, we need the following
%lemma which is a simple version of Lemma 4.20 in \cite{saberi-stoorvogel-sannuti-exter}.
%
%\begin{lemma}[\cite{saberi-stoorvogel-sannuti-exter}]\label{todo}
%	Assume ($A,B$) is stabilizable and $A$ has all its eigenvalues in the closed
%	left half plane. Then the ARE \eqref{arespecial} has a unique positive definite solution $P_{\eps}$ for any $\eps\in(0,1]$. Moreover, this positive
%	definite solution $P_{\eps}$ has the following properties:
%	\begin{enumerate}
%		\item For any $\eps\in(0,1]$, the unique solution $P_{\eps}>0$ is such that $A-BB\T P_{\eps}$ is
%		Hurwitz-stable.
%		\item $\lim_{\eps\to 0} P_{\eps}\to 0$
%		\item $P_{\eps}$ is continuously differentiable with respect to $\eps$ and
%		\[
%		\frac{dP_{\eps}}{d\eps}>0,\quad \text{for any $\eps\in(0,1]$.}
%		\]
%		
%	\end{enumerate}	
%\end{lemma}

\begin{proof}[Proof of Theorem \ref{mainthm2}]
Firstly, let $\tilde{x}_i=x_i-x_r$, we have
\[
\dot{\tilde{x}}_i=A{\tilde{x}}_i+B\sigma(u_i).
\]
Let $e_i=\tilde{x}_i-\chi_i$. According to \eqref{boundeps}, it yields by construction that $u_i$ does not get saturated, i.e., $\sigma(u_i)=u_i$, then we can obtain  
\begin{equation}
\begin{system*}{l}
\dot{\tilde{x}}_i=A\tilde{x}_i-BB\T P_{\eps(\tilde{x}_i-e_i)}(\tilde{x}_i-e_i),\\
\dot{e}_i=Ae_i-\sum_{j=1}^{N}\bar{\ell}_{ij}e_j.
\end{system*}	
\end{equation}
By defining 
	\[e=\begin{pmatrix}
	e_1\\\vdots\\e_N
\end{pmatrix} 
\]
we can obtain  
\begin{equation}\label{newsystem2}
\dot{e}=(I\otimes A-\bar{L}\otimes I)e
\end{equation}

Since all eigenvalues of $\bar{L}$ are positive, we have
\begin{equation}\label{boundapl}
(T\otimes I)(I\otimes A-\bar{L}\otimes I)(T^{-1}\otimes I)=I\otimes A-\bar{J}\otimes I
\end{equation}
for a non-singular transformation matrix $T$, where
%\[
%\bar{\Lambda}=\begin{pmatrix}
%{\lambda}_1&&\\
%&\ddots&\\
%&&{\lambda}_{N-1}
%\end{pmatrix}.
%\]
\eqref{boundapl}  is upper triangular Jordan form with $A-\lambda_i I$ for $i=1,\cdots,N-1$ on the diagonal. Since all eigenvalues of $A$ are in the closed left half plane, $A-\lambda_i I$ is stable. Therefore, all eigenvalues of $I\otimes A-\bar{L}\otimes I$ have negative real part.
	Therefore, we have that the dynamics for $e_i$ are asymptotically stable.
	
%	
%	 Thus, we have
%	\[
%	\|e_i\|\leq o_1 \text{ and } \|\dot{e}\|\leq o_2
%	\]
%	for two constants $o_1,o_2>0$.
%	
%	According to \eqref{boundeps}, it yields by construction that $u_i$ does not get saturated, i.e., $\sigma(u_i)=u_i$.
%	Thus, we have
%	\[
%	\begin{system*}{l}
%	\dot{\tilde{x}}_i=A\tilde{x}_i-BB\T P_{\eps(\tilde{x}_i-e_i)}(\tilde{x}_i-e_i)\\
%	\dot{e}_i=Ae_i-\sum_{j=1}^{N}\bar{\ell}_{ij}e_j
%	\end{system*}	
%	\] 	
	
	We choose the following Lyapunov function:
	\begin{equation}
	V_i=(\tilde{x}_i-e_i)\T P_{\eps_\alpha}(\tilde{x}_i-e_i)
	\end{equation}
	with $\eps_\alpha=\eps(\tilde{x}_i-e_i)$.
	
	Assume $V_i$ is non-increasing. Then we have
	\[
	\frac{dV_i}{dt}\leq0
	\]
	On the other hand, if $V_i$ is increasing then $\eps_\alpha$ is non-increasing, which implies that $P_{\eps_\alpha}$ is non-increasing. Meanwhile, we have
	\begin{align}
	\nonumber\frac{dV_i}{dt}\leq
	&(\tilde{x}_i-e_i)\T[P_{\eps_\alpha}(A-BB\T P_{\eps_\alpha})+(A-BB\T P_{\eps_\alpha})\T P_{\eps_\alpha}](\tilde{x}_i-e_i)\\
	\nonumber&-e_i\T[P_{\eps_\alpha}(A-BB\T P_{\eps_\alpha})+(A-BB\T P_{\eps_\alpha})\T P_{\eps_\alpha}]e_i\\
	\nonumber&+2e_i\T P_{\eps_\alpha}BB\T P_{\eps_\alpha} (\tilde{x}_i-e_i)-2(\tilde{x}_i-e_i)\T P_{\eps_\alpha}\dot{e}_i\\
	\nonumber&+(\tilde{x}_i-e_i)\T \frac{dP_{\eps_\alpha}}{dt}(\tilde{x}_i-e_i)\\
	\nonumber\leq &-\eps V_i +\|e_i\T[P_{\eps_\alpha}(A-BB\T P_{\eps_\alpha})+(A-BB\T P_{\eps_\alpha})\T P_{\eps_\alpha}]e_i\|\\
	&+2\left\| P_{\eps_\alpha}^{\frac{1}{2}}e_i\right\| \left\|P_{\eps_\alpha}^{\frac{1}{2}}BB\T P_{\eps_\alpha}^{\frac{1}{2}}\right\|V_i^{\frac{1}{2}}+2\left\|P_{\eps_\alpha}^{\frac{1}{2}}\dot{e}_i\right\|V_i^{\frac{1}{2}}\label{lypunbo}
	\end{align}
	with $P_{\eps_\alpha}$ satisfying \eqref{arespecial2}.
	
	Since $e_i$ is the state of an asymptotically stable system, there exist $z_1, z_2, z_3$, such that
	\begin{align*}
	&\|e_i\T[P_{\eps_\alpha}(A-BB\T P_{\eps_\alpha})+(A-BB\T P_{\eps_\alpha})\T P_{\eps_\alpha}]e_i\|_1\leq z_1\\
	&\left\| P_{\eps_\alpha}^{\frac{1}{2}}e_i\right\|_1 \leq z_2, 	\left\|P_{\eps_\alpha}^{\frac{1}{2}}\dot{e}_i\right\|_1\leq z_3
	\end{align*}
	
	Thus, we have  
	\[
	\frac{dV_i}{dt}\leq \beta_1(t)+\beta_2(t)V_i^{\frac{1}{2}}\leq(\beta_1(t) +\beta_2(t))(V_i+1)^{\frac{1}{2}}
	\]
	for suitable $\beta_1(t),\beta_2(t)\in \mathscr{L}_1$, and  $\beta_1(t),\beta_2(t)\ge 0$, and 
	\[
	\dot{W}(t)=(\beta_1 (t)+\beta_2(t))(W(t)+1)^{\frac{1}{2}}
	\]
	yields
	\[
	W(t)=\left[\int_0^t(\beta_1 (s)+\beta_2(s))ds+(W(0)+1)^{\frac{1}{2}}\right]^2-1
	\]
	Hence,
	\[
	V_i(t)\leq\left(\|\beta_1\|_1 +\|\beta_2\|_1+(V_i(0)+1)^{\frac{1}{2}}\right)^2
	\]
	Therefore $V_i(t)$ is bounded which implies $\eps_\alpha$ is bounded away from zero.
	
	Remains to show that $V_i\to 0$. From \cite[Lemma 6.1]{saberi-hou-stoorvogel}, we get
	\[
	\left|(\tilde{x}_i-e_i)\T \frac{dP_{\eps_\alpha}}{dt}(\tilde{x}_i-e_i)\right|\leq k\frac{dV_i}{dt}
	\]
	for some constant $k$. 
	
	Meanwhile, if $V_i$ is non-increasing we can derive, similar to our early analysis \eqref{lypunbo}, that
	\begin{align*}
	\frac{dV_i}{dt}\leq&-\eps_\alpha V_i+\bar{\beta}(V_i+1)^{\frac{1}{2}}+(\tilde{x}_i-e_i)\T \frac{dP_{\eps_\alpha}}{dt}(\tilde{x}_i-e_i)\\
	\leq&-\eps_\alpha V_i+\bar{\beta}(V_i+1)^{\frac{1}{2}}-k\frac{dV_i}{dt}
	\end{align*}
	where $\bar{\beta}=\beta_1+\beta_2$ and we get
	\[
	\frac{dV_i}{dt}\leq-\frac{\eps_\alpha}{1+k}V_i+\frac{\bar{\beta}}{1+k}(V_i+1)^{\frac{1}{2}}.
	\]
	
	But then we can prove that whether $V_i$ is increasing or decreasing we always have
	\[
	\frac{dV_i}{dt}\leq-\tilde{\alpha} V_i+\tilde{\beta}(V_i+1)^{\frac{1}{2}}.
	\]
	with $\tilde{\alpha}$ is a constant and lower bound of $\frac{\eps_\alpha}{1+k}$ and $\tilde{\beta}=\frac{\bar{\beta}}{1+k}\in \mathscr{L}_1$. Clearly, that implies $V_i\to 0$
\end{proof}\\

\subsection{Partial-state coupling}

For partial-state coupling, we design the following adaptive nonlinear protocol for each agent
$i\in\{1,\ldots,N\}$.
\begin{equation}\label{pscp4}
\begin{system}{cll}
\dot{\hat{x}}_i &=& A\hat{x}_i+B\hat{\zeta}_{i2}+K(\bar{\zeta}_i-C\hat{x}_i)+\iota_iBu_i \\
\dot{\chi}_i &=& A\chi_i+Bu_i+\hat{x}_i-\hat{\zeta}_{i1}-\iota_{i}\chi_i \\
u_i &=& -B\T P_{\eps(\chi_i)}\chi_i,
\end{system}
\end{equation}
where $K$ is a pre-design matrix such that $A-KC$ is Hurwitz stable. 
$P_{\eps(\chi_i)}$ is the unique solution of \eqref{arespecial2} with $\rho=\eps(\chi_i)$ where $\eps(\chi_i)$ is defined as \eqref{boundeps}. 
 In this protocol, the agents communicate $\xi_i=\begin{pmatrix}
\xi_{i1}\T,&\xi_{i2}\T
\end{pmatrix}\T=\begin{pmatrix}
\chi_i\T,&u_i\T
\end{pmatrix}\T$, i.e. each agent has access to additional information $\hat{\zeta}_i=\begin{pmatrix}
\hat{\zeta}_{i1}\T,&\hat{\zeta}_{i2}\T
\end{pmatrix}\T$, where:
\begin{equation}\label{add_1}
\hat{\zeta}_{i1}=\sum_{j=1}^Na_{ij}(\chi_i-\chi_j),
\end{equation}
and
\begin{equation}\label{add_2}
\hat{\zeta}_{i2}=\sum_{j=1}^{N}a_{ij}(u_i-u_j).
\end{equation}
while $\bar{\zeta}_i$ is defined via \eqref{zeta_l-n}.

Then, we obtain the synchronization result based on adaptation as the following theorem. 

\begin{theorem}\label{mainthm4}
	Consider a MAS described by \eqref{eq1} satisfying Assumption \ref{Aass}, and the associated exosystem
	\eqref{solu-cond}. Let a set of nodes $\mathscr{C}$ be given which
	defines the set $\mathbb{G}_{\mathscr{C}}^N$. Let the associated
	network communication be given by \eqref{zeta_l-n}. 
	
	Then, the scalable global regulated state synchronization problem as stated in Problem
	\ref{prob4} is solvable. In particular, the adaptive nonlinear
	dynamic protocol \eqref{pscp4}, \eqref{arespecial2}, and \eqref{boundeps} solves the scalable regulated state
	synchronization problem for any $N$ and any graph
	$\mathcal{G}\in\mathbb{G}_{\mathscr{C}}^N$. 
\end{theorem}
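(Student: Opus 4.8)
The plan is to reduce the partial-state case to the full-state analysis of Theorem \ref{mainthm2} by showing that the observer $\hat{x}_i$ asymptotically reconstructs the network-aggregated deviation, after which the closed loop inherits the same cascade structure. First I would set $\tilde{x}_i=x_i-x_r$, so that $\dot{\tilde{x}}_i=A\tilde{x}_i+B\sigma(u_i)$, and invoke the scheduling rule \eqref{boundeps} exactly as in Theorem \ref{mainthm2} to guarantee that $u_i$ never saturates, i.e.\ $\sigma(u_i)=u_i$. Writing $\bar{L}=\tilde{L}=L+\diag\{\iota_i\}$ and using $y_r=Cx_r$, one checks that the regulated measurement satisfies $\bar{\zeta}_i=C\sum_{j=1}^N\tilde{\ell}_{ij}\tilde{x}_j$, that $\hat{\zeta}_{i2}+\iota_iu_i=\sum_{j=1}^N\tilde{\ell}_{ij}u_j$, and that $\hat{\zeta}_{i1}+\iota_i\chi_i=\sum_{j=1}^N\tilde{\ell}_{ij}\chi_j$, since each network sum recombines through the definition $\ell_{ij}=\sum_k a_{ik}$ for $i=j$ and $-a_{ij}$ otherwise.

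Next I would introduce the aggregated signal $\bar{x}_i=\sum_{j=1}^N\tilde{\ell}_{ij}\tilde{x}_j$, which obeys $\dot{\bar{x}}_i=A\bar{x}_i+B\sum_{j}\tilde{\ell}_{ij}u_j=A\bar{x}_i+B(\hat{\zeta}_{i2}+\iota_iu_i)$ in the absence of saturation, and compare it with the observer $\hat{x}_i$. The observer error $\tilde{e}_i=\bar{x}_i-\hat{x}_i$ then satisfies the clean dynamics $\dot{\tilde{e}}_i=(A-KC)\tilde{e}_i$, so that $\tilde{e}_i\to0$ exponentially by the choice of $K$. This is exactly where the partial-state protocol pays off: all coupling and input terms cancel, leaving only the designed output-injection dynamics.

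Then I would define $e_i=\tilde{x}_i-\chi_i$ as in the full-state proof. Substituting $\hat{x}_i=\bar{x}_i-\tilde{e}_i$ into the $\chi_i$-equation and using $-\hat{\zeta}_{i1}-\iota_i\chi_i=-\sum_j\tilde{\ell}_{ij}\chi_j$ gives $\dot{\chi}_i=A\chi_i+Bu_i+\sum_j\tilde{\ell}_{ij}e_j-\tilde{e}_i$, whence, stacking over $i$,
\[
\dot{e}=(I\otimes A-\bar{L}\otimes I)e+\tilde{e}.
\]
By the argument already given around \eqref{boundapl}, the matrix $I\otimes A-\bar{L}\otimes I$ is Hurwitz, because each diagonal block $A-\lambda_iI$ is stable: the eigenvalues $\lambda_i$ of $\bar{L}$ have positive real part and those of $A$ lie in the closed left half plane. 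Since $\tilde{e}$ is an exponentially decaying, hence $\mathscr{L}_1$, perturbation, the variation-of-constants formula yields $e\to0$.

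Finally, with $\tilde{x}_i-e_i=\chi_i$, I would run the Lyapunov analysis of Theorem \ref{mainthm2} verbatim on $V_i=\chi_i\T P_{\eps(\chi_i)}\chi_i$: the boundedness estimate shows that $\eps_\alpha$ stays bounded away from zero, and the refined inequality $\dot{V}_i\leq-\tilde{\alpha}V_i+\tilde{\beta}(V_i+1)^{1/2}$ with $\tilde{\beta}\in\mathscr{L}_1$ forces $V_i\to0$, i.e.\ $\chi_i\to0$. Combining with $e_i\to0$ gives $\tilde{x}_i=\chi_i+e_i\to0$, which is the required regulated synchronization $x_i\to x_r$. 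The main obstacle is the second and third steps: one must verify that the three network sums recombine into a single expanded-Laplacian multiplication so that the observer error decouples as $(A-KC)\tilde{e}_i$ and then re-enters the $e$-dynamics only as a vanishing additive term. Once this cascade is established, the saturation and Lyapunov arguments are identical to the full-state case.
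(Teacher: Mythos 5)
Your proposal is correct and follows essentially the same route as the paper: the change of variables $e=\tilde{x}-\chi$ and $\tilde{e}_i=\sum_{j}\tilde{\ell}_{ij}\tilde{x}_j-\hat{x}_i$ is exactly the paper's $\bar{e}=(\bar{L}\otimes I)\tilde{x}-\hat{x}$, yielding the same cascade $\dot{\bar{e}}=I\otimes(A-KC)\bar{e}$, $\dot{e}=(I\otimes A-\bar{L}\otimes I)e+\bar{e}$, followed by the same reduction to the Lyapunov/scheduling argument of Theorem \ref{mainthm2} with $e_i$, $\dot{e}_i\in\mathscr{L}_1$. You merely make explicit the algebraic cancellations (e.g.\ $\hat{\zeta}_{i2}+\iota_iu_i=\sum_j\tilde{\ell}_{ij}u_j$) that the paper leaves implicit.
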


\begin{proof}[Proof of Theorem \ref{mainthm4}]
%	Similar to Theorem \ref{mainthm3}, 
Similar to Theorem \ref{mainthm2}, let $\tilde{x}_i=x_i-x_r$.
We also define 
\[
\tilde{x}=\begin{pmatrix}
\tilde{x}_1\\\vdots\\\tilde{x}_N
\end{pmatrix}  
\hat{x}=\begin{pmatrix}
\hat{x}_1\\\vdots\\\hat{x}_N
\end{pmatrix}
\chi=\begin{pmatrix}
\chi_1\\\vdots\\\chi_N
\end{pmatrix}
% \text{ and } 
%\sigma(u)=\begin{pmatrix}
%\sigma(u_1)\\\vdots\\\sigma(u_N)
%\end{pmatrix}
\]
According to \eqref{boundeps}, it yields by construction that $u_i$ does not get saturated, i.e., $\sigma(u_i)=u_i$. By defining $e=\tilde{x}-\chi$ and $\bar{e}=(\bar{L}\otimes I)\tilde{x}-\hat{x}$, we can obtain  
\begin{equation}\label{newsystem}
\begin{system*}{l}
\dot{\tilde{x}}_i=A\tilde{x}_i-BB\T P_{\eps(\tilde{x}_i-e_i)}(\tilde{x}_i-e_i)\\
\dot{\bar{e}}=I\otimes (A-KC)\bar{e}\\
\dot{e}=(I\otimes A-\bar{L}\otimes I)e+\bar{e}
\end{system*}
\end{equation}
	Since all eigenvalues of $A-KC$ and $I\otimes A-\bar{L}\otimes I$ have negative real part, we obtain that dynamics of $e$ and $\bar{e}$ are all asymptotically stable.

	Then, we just need to prove the stability of 
	
	\[
		\dot{\tilde{x}}_i=A\tilde{x}_i-BB\T P_{\eps(\tilde{x}_i-e_i)}(\tilde{x}_i-e_i)
	\]
%	\[
%	\begin{system*}{l}
%	\dot{\tilde{x}}_i=A\tilde{x}_i+B\sigma(-B\T P_{\eps(\tilde{x}_i-e_i)}(\tilde{x}_i-e_i))\\
%	\dot{\bar{e}}_i=(A-KC)\bar{e}_i\\
%	\dot{e}_i=Ae_i-\sum_{j=1}^{N}\bar{\ell}_{ij}e_j+\bar{e}_i
%	\end{system*}	
%	\]	
with $e_i$ and $\dot{e}_i$ in $\mathscr{L}_1$. Then, similar to the proof of Theorem \ref{mainthm2}, the synchronization result can be obtained.
\end{proof}

\section{Scalabale semi-global regulated state synchronization of mas in presence of input saturation}

In this section, we will consider the scalable semi-global regulated state synchronization problem for a
MAS with input saturation for networks with full- and partial-state coupling.

\subsection{Full-state coupling}

We will design a parametrized linear dynamic protocol with parameter
$\eps\in (0,1]$ for agent
$i\in\{1,\ldots,N\}$ as follows.
\begin{equation}\label{pscp1}
\begin{system}{cll}
\dot{\chi}_i &=& A\chi_i+Bu_i+\bar{\zeta}_i-\hat{\zeta}_i-\iota_{i}\chi_i \\
u_i &=& -B\T P_{\eps}\chi_i,
\end{system}
\end{equation}
where $P_{\eps}$ is the unique solution of the following ARE 
\begin{equation}\label{arespecial}
A\T P_{\eps} + P_{\eps} A -  P_{\eps} BB\T
P_{\eps} + \eps I = 0.
\end{equation}
Note that \cite{saberi-stoorvogel-sannuti-exter} implies that \eqref{arespecial} has a unique solution for any $\eps>0$ and $P_\rho \to 0$ as $\eps \to 0$.
The protocol requires the additional information $\hat{\zeta}_i$ as \eqref{info1}, while $\bar{\zeta}_i$ is defined by \eqref{zeta_l-nn}.

Our formal result is stated in the following theorem.
\begin{theorem}\label{mainthm1}
	Consider a MAS described by \eqref{neq1} satisfying Assumption \ref{Aass}, and the associated exosystem
	\eqref{solu-cond}. Let a set of nodes $\mathscr{C}$ be given which
	defines the set $\mathbb{G}_{\mathscr{C}}^N$. Let the associated
	network communication be given by \eqref{zeta_l-nn}.
	
	 Then, the scalable semi-global regulated state synchronization problem as stated in Problem
	\ref{prob3} is solvable. In particular, for any given compact sets $\mathbb{S}_a\in\mathbb{R}^n$, $\mathbb{S}_e\in\mathbb{R}^n$ and $\mathbb{S}_c\in\mathbb{R}^{n}$, and for any $N$ and any graph $\mathcal{G}\in\mathbb{G}_{\mathscr{C}}^N$, there exists $\eps^* > 0$ such that, for any $\eps\in(0,\eps^*]$, the dynamic protocol \eqref{pscp1} and \eqref{arespecial} solves the scalable regulated state
	synchronization problem.
\end{theorem}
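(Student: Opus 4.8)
The plan is to reuse the change of variables and error analysis from the proof of Theorem~\ref{mainthm2}, replacing the constructive no-saturation guarantee \eqref{boundeps} of the adaptive scheme by a low-gain, semi-global argument built on the fact that $P_\eps\to0$ as $\eps\to0$. First I would set $\tilde{x}_i=x_i-x_r$ and $e_i=\tilde{x}_i-\chi_i$, so that $\chi_i=\tilde{x}_i-e_i$ and $u_i=-B\T P_\eps\chi_i$. Substituting \eqref{zeta_l-nn} into \eqref{pscp1} and using $x_i-x_j=\tilde{x}_i-\tilde{x}_j$ and $x_i-x_r=\tilde{x}_i$, the combination $\bar{\zeta}_i-\hat{\zeta}_i-\iota_i\chi_i$ collapses to $\sum_{j}\bar{\ell}_{ij}e_j$, where $\bar{L}=L+\diag\{\iota_i\}$ is the expanded Laplacian. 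Keeping the saturation explicit, this yields $\dot{\tilde{x}}_i=A\tilde{x}_i+B\sigma(u_i)$ together with the stacked error equation $\dot{e}=(I\otimes A-\bar{L}\otimes I)e+(I\otimes B)(\sigma(u)-u)$, so that $e$ is driven solely by the saturation defect $\sigma(u)-u$. As in Theorem~\ref{mainthm2}, Definition~\ref{def_rootset} places the eigenvalues of $\bar{L}$ in the open right half-plane, and since $A$ has its spectrum in the closed left half-plane, $I\otimes A-\bar{L}\otimes I$ is Hurwitz; crucially, this matrix is independent of $\eps$.

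The heart of the argument, and the step I expect to be the main obstacle, is to prove that for $\eps$ small enough the input never saturates, so that $\sigma(u)-u\equiv0$ and the loop decouples. I would establish this by a maximal-interval (continuation) argument on $V_i=\chi_i\T P_\eps\chi_i$. Because $x_i(0)\in\mathbb{S}_a$, $x_r(0)\in\mathbb{S}_e$ and $\chi_i(0)\in\mathbb{S}_c$ lie in fixed compact sets, both $\chi_i(0)$ and $e(0)$ are bounded independently of $\eps$, and $V_i(0)\le\lambda_{\max}(P_\eps)\sup_{\mathbb{S}_c}\|\chi_i(0)\|^2\to0$ as $\eps\to0$. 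Let $[0,T)$ be the largest interval on which $\|B\T P_\eps\chi_i\|_\infty\le1$ for every $i$; there $\sigma(u)=u$, the $e$-dynamics is the autonomous Hurwitz system, and $\|e(t)\|\le Me^{-\alpha t}\|e(0)\|$ with $M,\alpha$ independent of $\eps$. Differentiating $V_i$ and eliminating $A\T P_\eps+P_\eps A$ through the ARE \eqref{arespecial} gives $\dot{V}_i=-\eps\|\chi_i\|^2-\chi_i\T P_\eps BB\T P_\eps\chi_i+2\chi_i\T P_\eps\sum_{j}\bar{\ell}_{ij}e_j\le\beta(t)V_i^{1/2}$, where the cross term is bounded using $\|P_\eps^{1/2}\|\to0$ and the exponential decay of $e$, so that $\beta\in\mathscr{L}_1$ with $\|\beta\|_1$ arbitrarily small for small $\eps$. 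Integration yields $V_i(t)^{1/2}\le V_i(0)^{1/2}+\tfrac{1}{2}\|\beta\|_1$, which I can drive below any prescribed threshold by shrinking $\eps$. Finally, with $b_k$ the $k$-th column of $B$, $|u_{i,k}|=|b_k\T P_\eps\chi_i|\le(b_k\T P_\eps b_k)^{1/2}V_i^{1/2}\le(\trace B\T P_\eps B)^{1/2}V_i^{1/2}$, and since $\trace B\T P_\eps B\to0$ this keeps $\|B\T P_\eps\chi_i\|_\infty$ strictly below $1$ throughout $[0,T)$, contradicting the maximality of $T$ unless $T=\infty$. Hence no saturation ever occurs.

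With saturation excluded, the closed loop decouples exactly as in Theorem~\ref{mainthm2}: $\dot{e}=(I\otimes A-\bar{L}\otimes I)e$ is asymptotically stable, so $e_i,\dot{e}_i\in\mathscr{L}_1$ and decay to zero, while $\dot{\tilde{x}}_i=A\tilde{x}_i-BB\T P_\eps(\tilde{x}_i-e_i)$. Repeating the $V_i$ estimate but now retaining the negative term gives $\dot{V}_i\le-\tfrac{\eps}{\lambda_{\max}(P_\eps)}V_i+\beta(t)V_i^{1/2}$ with $\beta\in\mathscr{L}_1$; note that since $\eps$ is fixed there is no $\tfrac{dP_\eps}{dt}$ contribution, so the analysis is strictly simpler than the adaptive case and the appeal to \cite[Lemma~6.1]{saberi-hou-stoorvogel} is not needed. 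A standard comparison argument then forces $V_i\to0$, hence $\chi_i\to0$; combined with $e_i\to0$ this gives $\tilde{x}_i=\chi_i+e_i\to0$, i.e.\ $x_i\to x_r$, which is the regulated state synchronization \eqref{synchro}. The only freedom used is that $\eps^*$ may depend on $N$, on $\mathcal{G}$ and on the compact sets, which is precisely what Problem~\ref{prob3} allows.
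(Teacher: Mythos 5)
Your proposal is correct, and it follows the paper's overall low-gain strategy---the same change of variables $\tilde{x}_i=x_i-x_r$, the same expanded Laplacian $\bar{L}$ with the error $e=\tilde{x}-\chi$ whose dynamics $I\otimes A-\bar{L}\otimes I$ are Hurwitz independently of $\eps$, and the same mechanism ($P_\eps\to 0$ in \eqref{arespecial}) for excluding saturation semi-globally---but it differs from the paper's proof in two ways worth recording. First, the paper runs its Lyapunov argument on $\tilde{x}$ with $V=\tilde{x}\T(I\otimes P_\eps)\tilde{x}$ and splits the input as $u=-(I\otimes B\T P_\eps)(\tilde{x}-e)$, bounding $\|(I\otimes B\T P_\eps)\tilde{x}\|_\infty<\tfrac{1}{2}$ through the chain \eqref{boundxt}--\eqref{boundx0} and $\|(I\otimes B\T P_\eps)e\|_\infty<\tfrac{1}{2}$ separately via \eqref{ebound1}--\eqref{ebound2}; you instead place the Lyapunov function directly on $\chi_i$, the natural variable since $u_i=-B\T P_\eps\chi_i$, and bound $|u_{i,k}|\leq(\trace B\T P_\eps B)^{1/2}V_i^{1/2}$---in effect freezing the certificate \eqref{boundeps} from the adaptive design of Theorem \ref{mainthm2} at a fixed $\eps$. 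Both routes work; yours avoids the factor-of-two split, while the paper's keeps the quantity one ultimately drives to zero in view throughout. Second, and more substantively, your maximal-interval argument with the saturation defect $(I\otimes B)(\sigma(u)-u)$ retained in the $e$-equation is more careful than the paper's treatment: the paper states \eqref{newsystem6} and the bounds on $e$ under the standing hypothesis ``if the saturation is not active'' and then integrates, leaving implicit the circularity that bounds on $e$ derived assuming no saturation are used to conclude no saturation; your continuation argument closes that loop explicitly. Your end-game (comparison estimate forcing $V_i\to 0$, hence $\chi_i\to 0$ and $\tilde{x}_i=\chi_i+e_i\to 0$) is equivalent to the paper's cascade-stability conclusion around \eqref{statefeedback3}, and your observation that no $\tfrac{dP_\eps}{dt}$ term arises is exactly why the semi-global case is simpler than Theorem \ref{mainthm2}. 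Only minor technicalities remain: integrate $\dot{V}_i\leq\beta V_i^{1/2}$ via $W=(V_i+\delta)^{1/2}$ or Dini derivatives to handle $V_i=0$, and maintain a strict margin such as $\|B\T P_\eps\chi_i\|_\infty\leq\tfrac{1}{2}<1$ on $[0,T)$ before invoking continuity at $T$, which your estimates already provide.
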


\begin{proof}[Proof of Theorem \ref{mainthm1}]
	Firstly, let $\tilde{x}_i=x_i-x_r$, we have
	\[
	\dot{\tilde{x}}_i=A{\tilde{x}}_i+B\sigma(u_i)
	\]
	Then, we define 
	\[
	\tilde{x}=\begin{pmatrix}
	\tilde{x}_1\\\vdots\\\tilde{x}_N
	\end{pmatrix}  
	\chi=\begin{pmatrix}
	\chi_1\\\vdots\\\chi_N
	\end{pmatrix} \text{ and } 
	\sigma(u)=\begin{pmatrix}
	\sigma(u_1)\\\vdots\\\sigma(u_N)
	\end{pmatrix}
	\]
	then we have the following closed-loop system
	\begin{equation}
	\begin{system}{l}
	\dot{\tilde{x}}=(I\otimes A) \tilde{x}+(I\otimes B)\sigma(-I\otimes (B\T P_\eps)\chi)\\
	\dot{\chi}=(I\otimes (A-BB\T P_\eps) )\chi+(\bar{L}\otimes I)(\tilde{x}-\chi)
	\end{system}
	\end{equation}
	
	Let $e=\tilde{x}-\chi$, if the saturation is not active, we can obtain  
	\begin{equation}\label{newsystem6}
	\begin{system*}{l}
		\dot{\tilde{x}}=I\otimes (A-BB\T P_\eps) \tilde{x}+I\otimes (BB\T P_\eps)e\\
	\dot{e}=(I\otimes A-\bar{L}\otimes I)e
	\end{system*}
	\end{equation}
	Meanwhile from \eqref{boundapl}, we have that the dynamics for $e$ are asymptoticaly stable. Then, there exists an $\eps_1$ such that for $\eps<\eps_1$
	\begin{align}
	&\left\|(I\otimes B\T P_\eps)e\right\|_\infty<\frac{1}{2}\label{ebound1}\\
	&\left\|(I\otimes B\T P_\eps)e\right\|_2<1\label{ebound2}
	\end{align}
	since $e$ has asymptotically stable dynamics and bounded initial conditions.
	
	There exists $b_s>0$ and $\eps_2<\eps_1$ such that
	\begin{equation}\label{boundxt}
	\tilde{x}\T (I\otimes P_\eps) \tilde{x}<b_s+1
	\end{equation}
	implies that
	\begin{equation}\label{satux}
	\left\|(I\otimes B\T P_\eps)\tilde{x}\right\|_\infty<\frac{1}{2}
	\end{equation}
	for any $\eps<\eps_2$.
	
	There exists $\eps_3<\eps_2$ such that for $\eps<\eps_3$
	\begin{equation}\label{boundx0}
	\tilde{x}\T(0) (I\otimes P_\eps) \tilde{x}(0)<b_s
	\end{equation}
	for all initial conditions (inside a given compact set).
	
	Next, we prove \eqref{boundxt}. Consider the following Lyapunov function
	\begin{equation}
	V(t)=\tilde{x}\T (I\otimes P_\eps) \tilde{x},
	\end{equation}
	if the saturation is not active, then
	\begin{align*}
	\dot{V}(t)=&\tilde{x}\T (I\otimes\left[P_\eps(A-BB\T P_\eps)+(A-BB\T P_\eps)\T P_\eps\right])\tilde{x}\\
	&-2\tilde{x}\T\left[ I\otimes (P_\eps BB\T P_\eps)\right]e\\
	\leq &-\eps \tilde{x}\T\tilde{x}-\tilde{x}\T \left[I\otimes (P_\eps BB\T P_\eps)\right]\tilde{x}-2\tilde{x}\T\left[ I\otimes (P_\eps BB\T P_\eps)\right]e\\
	\leq & -\eps \tilde{x}\T\tilde{x}+e\T \left[I\otimes (P_\eps BB\T P_\eps)\right]e\\
	\leq &\|I\otimes (B\T P_\eps)e\|^2
	\end{align*}
	Integrating both sides of $\dot{V}(t)$, we have
	\[
	V(t)< V(0) + 1<b_s+1 
	\]
	using \eqref{ebound2}, which proves \eqref{boundxt} for $\eps\leq\eps^*$ with $\eps^*<\eps_3$.
	
	Inequality \eqref{boundxt} guarantees that \eqref{satux} is satisfied and by combining with \eqref{ebound1} we find that the saturation never gets activated, i.e.
	\begin{equation}\label{uas}
	\sigma(-(I\otimes B\T P_{\eps})(\tilde{x}-e))=-(I\otimes B\T P_{\eps})(\tilde{x}-e)
	\end{equation}
	
	Then, since we have $I\otimes A-\bar{L}\otimes I$ is asymptotically stable from \eqref{boundapl}, we just need to prove the stability of
	\begin{equation}\label{statefeedback3}
	\dot{\tilde{x}}=I\otimes (A-BB\T P_\eps) \tilde{x}
	\end{equation}
	which $A-BB\T P_\eps$ is Hurwitz stable. Therefore, we can obtain the asymptotical stability of \eqref{newsystem2}, i.e.,
	\[
	\lim_{t\to \infty}\tilde{x}_i\to 0.
	\]
	It implies that $x_i-x_r\to0$, which proves the result.
\end{proof}

\subsection{Partial-state coupling}

Now, we consider the case via partial-state coupling.
We design a parametrized linear dynamic protocol with parameter
$\eps\in (0,1]$ for agent
$i\in\{1,\ldots,N\}$ as follows.
\begin{equation}\label{pscp3}
\begin{system}{cll}
\dot{\hat{x}}_i &=& A\hat{x}_i+B\hat{\zeta}_{i2}+K(\bar{\zeta}_i-C\hat{x}_i)+\iota_i Bu_i \\
\dot{\chi}_i &=& A\chi_i+Bu_i+\hat{x}_i-\hat{\zeta}_{i1}-\iota_{i}\chi_i \\
u_i &=& -B\T P_{\eps}\chi_i,
\end{system}
\end{equation}
where $K$ is a pre-design matrix such that $A-KC$ is Hurwitz stable and $P_{\eps}$ is the unique solutions of \eqref{arespecial}.
Moreover, the agents communicate $\xi_i=\begin{pmatrix}
\xi_{i1}\T,&\xi_{i2}\T
\end{pmatrix}\T=\begin{pmatrix}
\chi_i\T,&u_i\T
\end{pmatrix}\T$, i.e. each agent has access to additional information $\hat{\zeta}_i=\begin{pmatrix}
\hat{\zeta}_{i2}\T,&\hat{\zeta}_{i2}\T
\end{pmatrix}\T$ as \eqref{add_1} and \eqref{add_2}, while $\bar{\zeta}_i$ is defined via \eqref{zeta_l-n}.

Then we have the following theorem for MAS via partial-state coupling.

\begin{theorem}\label{mainthm3}
	Consider a MAS described by \eqref{eq1} satisfying Assumption \ref{Aass}, and the associated exosystem
	\eqref{solu-cond}. Let a set of nodes $\mathscr{C}$ be given which
	defines the set $\mathbb{G}_{\mathscr{C}}^N$. Let the associated
	network communication be given by \eqref{zeta_l-n}. 
	
	Then, the scalable semi-global regulated state synchronization problem as stated in Problem
	\ref{prob3} is solvable. In particular, for any given compact sets $\mathbb{S}_a\in\R^n$, $\mathbb{S}_e\in\R^n$, and $\mathbb{S}_c\in\R^{2n}$, and for any $N$ and any graph
	$\mathcal{G}\in\mathbb{G}_{\mathscr{C}}^N$, there exists $\eps^* > 0$ such that, for any $\eps\in(0,\eps^*]$ the dynamic protocol \eqref{pscp3} and \eqref{arespecial} solves the scalable regulated state
	synchronization problem.
\end{theorem}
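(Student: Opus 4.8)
The plan is to fuse the error-coordinate decomposition from the proof of Theorem \ref{mainthm4} with the semi-global saturation-avoidance argument from the proof of Theorem \ref{mainthm1}. The protocol \eqref{pscp3} is exactly the observer-based (partial-state) counterpart of the full-state design \eqref{pscp1}, obtained from \eqref{pscp4} by replacing the adaptive gain $P_{\eps(\chi_i)}$ with the fixed-$\eps$ solution $P_\eps$ of \eqref{arespecial}. First I would set $\tilde{x}_i = x_i - x_r$ so that $\dot{\tilde{x}}_i = A\tilde{x}_i + B\sigma(u_i)$, and introduce the two error variables $e = \tilde{x} - \chi$ and $\bar{e} = (\bar{L}\otimes I)\tilde{x} - \hat{x}$ precisely as in the proof of Theorem \ref{mainthm4}.

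Assuming the saturation is inactive, the same manipulation that produced \eqref{newsystem} yields the cascade
\begin{align*}
\dot{\tilde{x}}_i &= A\tilde{x}_i - BB\T P_\eps(\tilde{x}_i - e_i),\\
\dot{\bar{e}} &= I\otimes(A-KC)\bar{e},\\
\dot{e} &= (I\otimes A - \bar{L}\otimes I)e + \bar{e},
\end{align*}
the only difference from \eqref{newsystem} being that $P_\eps$ is now constant. Since $K$ makes $A-KC$ Hurwitz and $I\otimes A - \bar{L}\otimes I$ has all eigenvalues in the open left half plane by \eqref{boundapl}, the $\bar{e}$-subsystem is asymptotically stable, and the $e$-subsystem is an asymptotically stable system driven by the exponentially decaying $\bar{e}$. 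Hence both $\bar{e}\to 0$ and $e\to 0$, with $e,\dot{e}\in\mathscr{L}_2$.

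With this established, I would replicate the semi-global argument of Theorem \ref{mainthm1}. Using the decay of $e$ I would first fix $\eps_1$ so that the bounds \eqref{ebound1}--\eqref{ebound2} on $\|(I\otimes B\T P_\eps)e\|_\infty$ and $\|(I\otimes B\T P_\eps)e\|_2$ hold; then I would select $b_s$ and $\eps_3<\eps_1$ as in \eqref{boundxt}--\eqref{boundx0} so that the Lyapunov function $V(t)=\tilde{x}\T(I\otimes P_\eps)\tilde{x}$ remains below $b_s+1$ along trajectories. This guarantees $\|(I\otimes B\T P_\eps)\tilde{x}\|_\infty<\tfrac12$ and, combined with \eqref{ebound1}, that the control $u = -(I\otimes B\T P_\eps)(\tilde{x}-e)$ never saturates. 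Once saturation is excluded, the $\tilde{x}$-dynamics reduce to $\dot{\tilde{x}} = I\otimes(A-BB\T P_\eps)\tilde{x} + I\otimes(BB\T P_\eps)e$ with $A-BB\T P_\eps$ Hurwitz and $e\to 0$, so $\tilde{x}\to 0$, i.e.\ $x_i - x_r\to 0$.

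The main obstacle, relative to the full-state proof of Theorem \ref{mainthm1}, is that $e$ is no longer an autonomous asymptotically stable system but is forced by $\bar{e}$, so the $\mathscr{L}_2$/$\mathscr{L}_\infty$ bounds \eqref{ebound1}--\eqref{ebound2} must absorb both the initial condition of $e$ and the transient of the observer error $\bar{e}$, whose initial value ranges over the enlarged compact set $\mathbb{S}_c\subset\R^{2n}$. Care is therefore needed to verify that these bounds can still be driven uniformly small by decreasing $\eps$ while the overshoot of the cascade stays controlled; this is a routine but slightly more delicate cascade estimate than in the full-state case.
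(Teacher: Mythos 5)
Your proposal follows essentially the same route as the paper's own proof: the identical change of variables $\tilde{x}_i=x_i-x_r$ with error coordinates $e=\tilde{x}-\chi$ and $\bar{e}=(\bar{L}\otimes I)\tilde{x}-\hat{x}$, the same cascade \eqref{newsystem3}, and the same reuse of the saturation-avoidance steps \eqref{ebound1}--\eqref{uas} from Theorem \ref{mainthm1}. Your closing observation that the $\mathscr{L}_2$/$\mathscr{L}_\infty$ bounds on $e$ must now also absorb the observer transient $\bar{e}$ with initial conditions ranging over $\mathbb{S}_c\subset\R^{2n}$ is a detail the paper glosses over with ``similar to the proof of Theorem \ref{mainthm1}'', so your version is, if anything, slightly more careful on that point.
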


\begin{proof}[Proof of Theorem \ref{mainthm3}]
	Similar to Theorem \ref{mainthm1}, let $\tilde{x}_i=x_i-x_r$, we have
	\[
	\begin{system}{cll}
	\dot{\tilde{x}}_i&=&A{\tilde{x}}_i+B\sigma(u_i)\\
	\dot{\hat{x}}_i &=& A\hat{x}_i+B\hat{\zeta}_{i2}+K(\bar{\zeta}_i-C\hat{x}_i)+\iota_i Bu_i\\
	\dot{\chi}_i &=& A\chi_i+Bu_i+\hat{x}_i-\hat{\zeta}_{i1}-\iota_{i}\chi_i 
	\end{system}
	\]
	
	we define 
	\[
	\tilde{x}=\begin{pmatrix}
	\tilde{x}_1\\\vdots\\\tilde{x}_N
	\end{pmatrix}  
	\hat{x}=\begin{pmatrix}
	\hat{x}_1\\\vdots\\\hat{x}_N
	\end{pmatrix}
	\chi=\begin{pmatrix}
	\chi_1\\\vdots\\\chi_N
	\end{pmatrix} \text{ and } 
	\sigma(u)=\begin{pmatrix}
	\sigma(u_1)\\\vdots\\\sigma(u_N)
	\end{pmatrix}
	\]
	then we have the following closed-loop system
	\begin{equation}
	\begin{system*}{l}
	\dot{\tilde{x}}=(I\otimes A) \tilde{x}+(I\otimes B)\sigma(-I\otimes (B\T P_\eps)\chi)\\
	\dot{\hat{x}} = I\otimes (A-KC)\hat{x}-(\bar{L}\otimes BB\T P_\eps)\chi+(\bar{L}\otimes KC)\tilde{x} \\
	\dot{\chi} = (I\otimes A-\bar{L}\otimes I)\chi-(I\otimes BB\T P_\eps)\chi+\hat{x}
	\end{system*}
	\end{equation}
	
	Then, similar to the proof (from \eqref{ebound1} to \eqref{uas}) of Theorem \ref{mainthm1}, we have that the saturation never gets activated for $\eps\leq\eps^*$ with $\eps^*\ll1$. By defining $e=\tilde{x}-\chi$ and $\bar{e}=(\bar{L}\otimes I)\tilde{x}-\hat{x}$, we can obtain  
	\begin{equation}\label{newsystem3}
	\begin{system*}{l}
	\dot{\tilde{x}}=I\otimes (A-BB\T P_\eps) \tilde{x}+I\otimes (BB\T P_\eps)e\\
	\dot{\bar{e}}=I\otimes (A-KC)\bar{e}\\
	\dot{e}=(I\otimes A-\bar{L}\otimes I)e+\bar{e}
	\end{system*}
	\end{equation}

	Since $I\otimes A-\bar{L}\otimes I$, $A-KC$, and $A-BB\T P_\eps$ are stable, we obtain 
	\[
	\lim_{t\to \infty}\tilde{x}_i\to 0
	\]
	It implies that $x_i-x_r\to0$, which proves the result.
\end{proof}

\section{Numerical Example}
In this section, we will illustrate the effectiveness of our protocols with a numerical example for global synchronization of MAS with partial-state coupling. We show that our one shot protocol design \eqref{pscp4} works for any graph with any number of agents.  

Consider the agents model \eqref{eq1} as:
\begin{equation*}\label{ex}
\begin{cases}
\dot{x}_i=\begin{pmatrix}
0&1&0\\0&0&1\\0&0&0
\end{pmatrix}x_i+\begin{pmatrix}
0\\0\\1
\end{pmatrix}\sigma(u_i),\\
y_i=\begin{pmatrix}
1&0&0
\end{pmatrix}x_i
\end{cases}
\end{equation*}
and the exosystem:
\begin{equation*}\label{exo_ex}
\begin{cases}
\dot{x}_r=\begin{pmatrix}
0&1&0\\0&0&1\\0&0&0
\end{pmatrix}x_r,\\
y_r=\begin{pmatrix}
1&0&0
\end{pmatrix}x_r
\end{cases}
\end{equation*}

We consider three different MAS with different number of agents and different communication topologies to show that the designed protocol is independent of the
communication network and number of agents $N$. 

\begin{itemize}
	\item \emph{Case $1$:} Consider MAS with $5$ agents $N=5$, and directed communication topology shown in Figure \ref{graph_1}.\\

 \begin{figure}[h]
	\includegraphics[width=5cm, height=2.5cm]{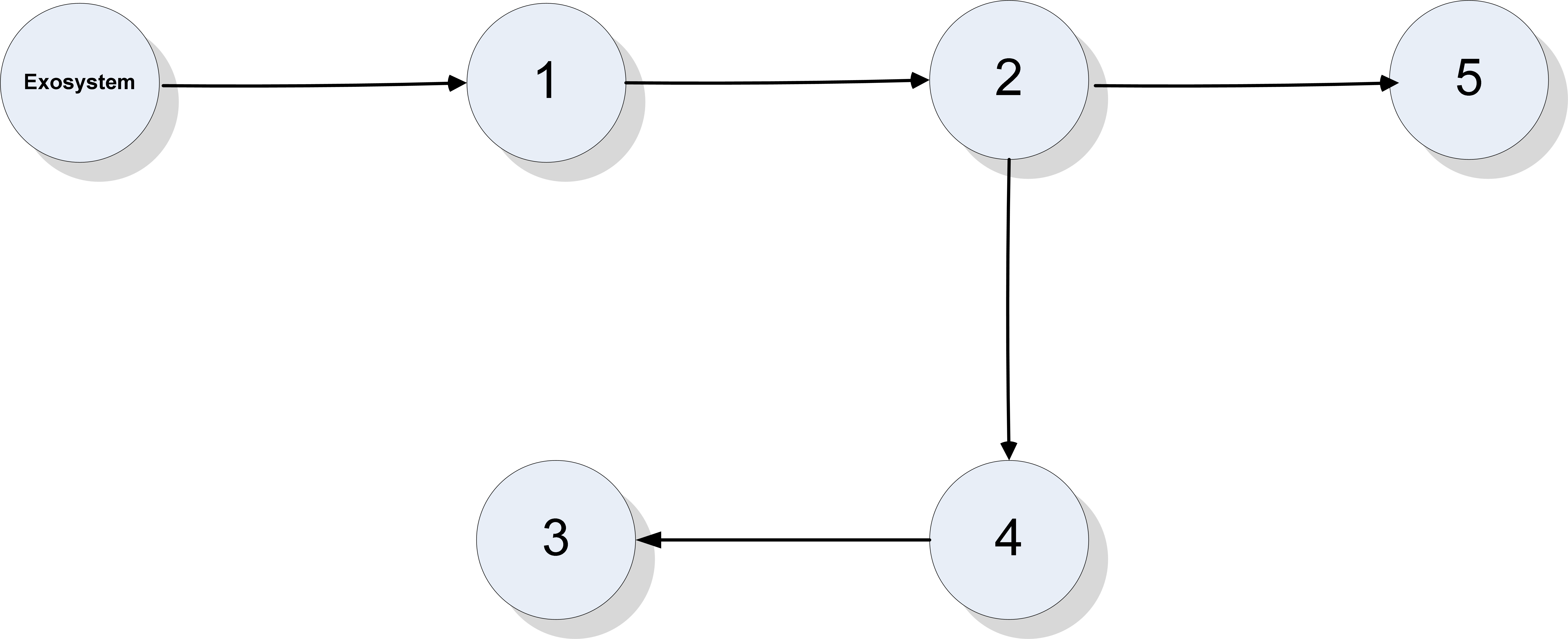}
	\centering
	\vspace*{-.2cm}
	\caption{The directed communication network, Case $1$}\label{graph_1}
\end{figure}

	\item \emph{Case $2$:} In this case, we consider MAS with $6$ agents $N = 6$, and directed communication topology shown in Figure \ref{graph_2}.
%	 The associated Laplacian matrix $L_2$ is as following:
%\[
%L_2=\begin{pmatrix}
%1&0&-1&0&0&0\\-1&1&0&0&0&0\\0&-1&1&0&0&0\\0&0&-1&1&0&0\\0&0&0&-1&1&0\\0&0&-1&0&-1&2
%\end{pmatrix}
%\]

\begin{figure}[h!]
	\includegraphics[width=5cm, height=3cm]{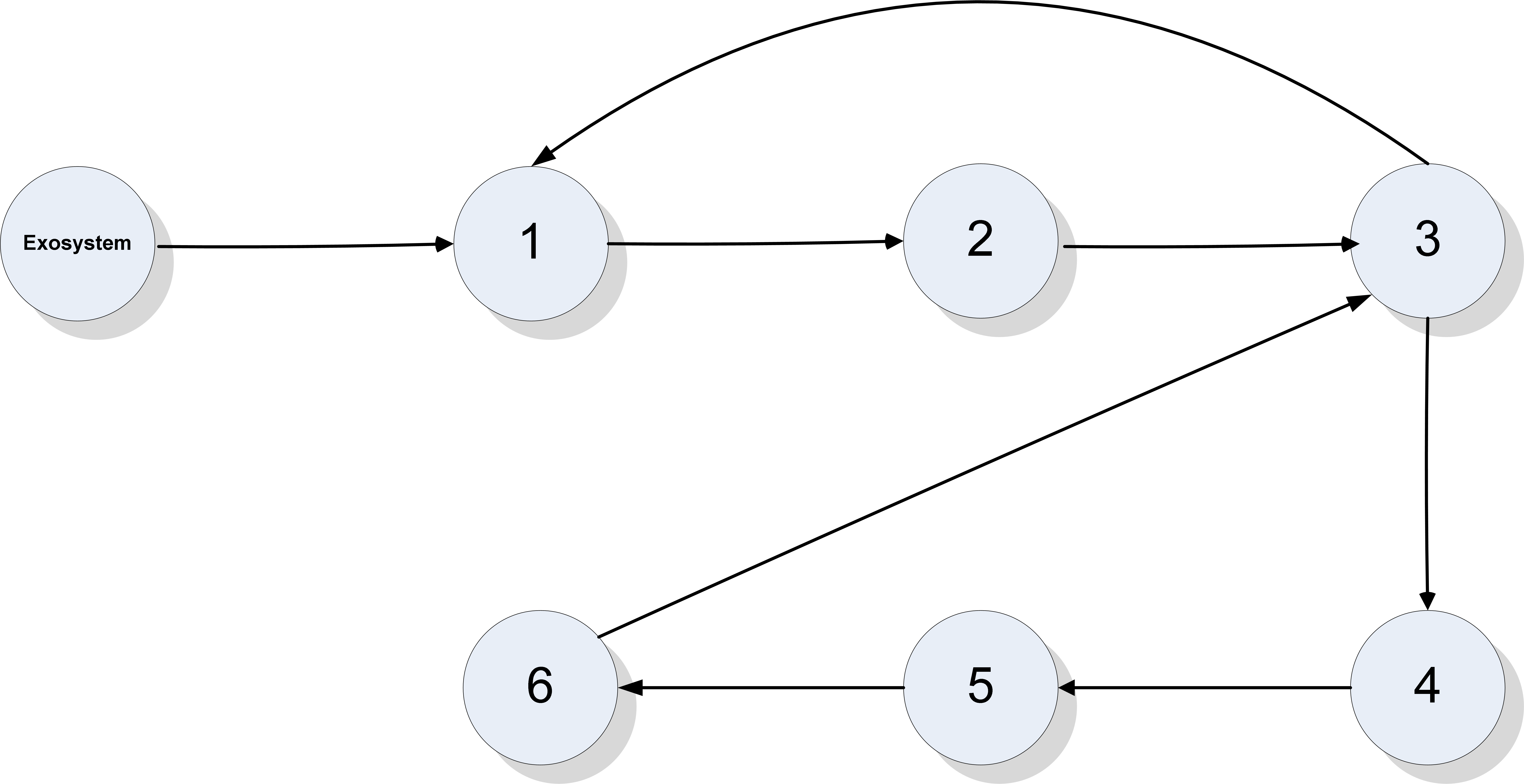}
	\centering
	\vspace*{-.2cm}
	\caption{The directed communication network, Case $2$}\label{graph_2}
\end{figure}

	\item \emph{Case $3$:} Finally, we consider the MAS with $3$ agents, $N =3$ and communication graph shown in Figure \ref{graph_3}.
%	 The associated Laplacian matrix $L_3$ is as following:
%\[
%L_3=\begin{pmatrix}
%0&0&0\\-1&1&0\\0&-1&1
%\end{pmatrix}
%\]

\begin{figure}[h!]
	\includegraphics[width=4cm, height=2.5cm]{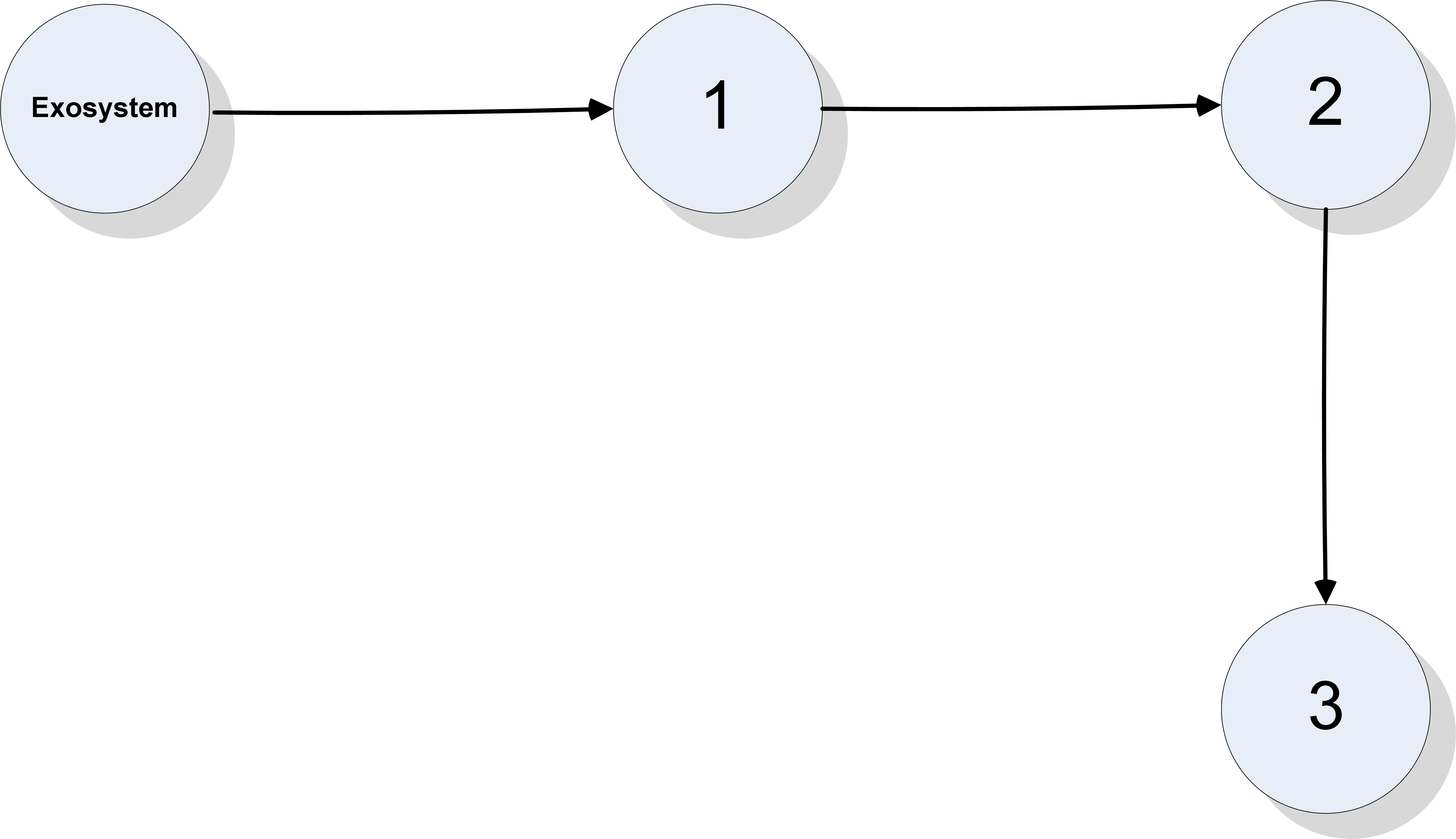}
	\centering
	\vspace*{-.2cm}
	\caption{The directed communication network, Case $3$}\label{graph_3}
\end{figure}
\end{itemize}
The results are demonstrated in Figure \ref{results_case11}-\ref{results_case33}. The simulation results show that the protocol design is independent of the communication graph and is scale free so that we can achieve state synchronization with one shot protocol design as \eqref{pscp4}, for any graph with any number of agents. 

\begin{figure}[t]
	\includegraphics[width=9cm, height=8cm]{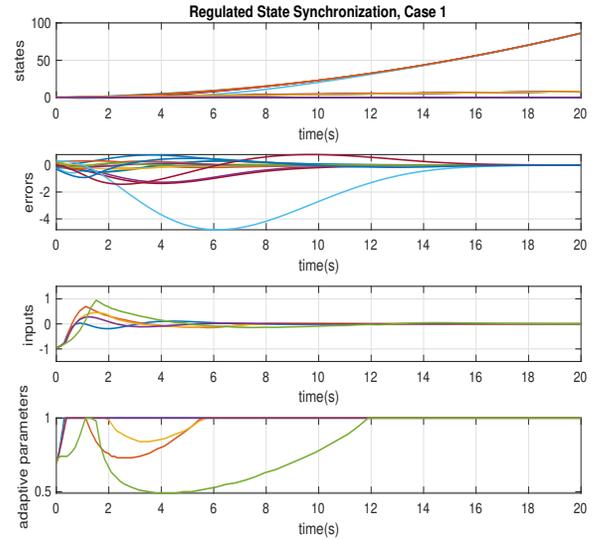}
	\centering
	\vspace*{-1cm}
	\caption{Results for MAS with associated communication graph $1$}\label{results_case11}
\end{figure}

\begin{figure}[t]
	\includegraphics[width=9cm, height=8cm]{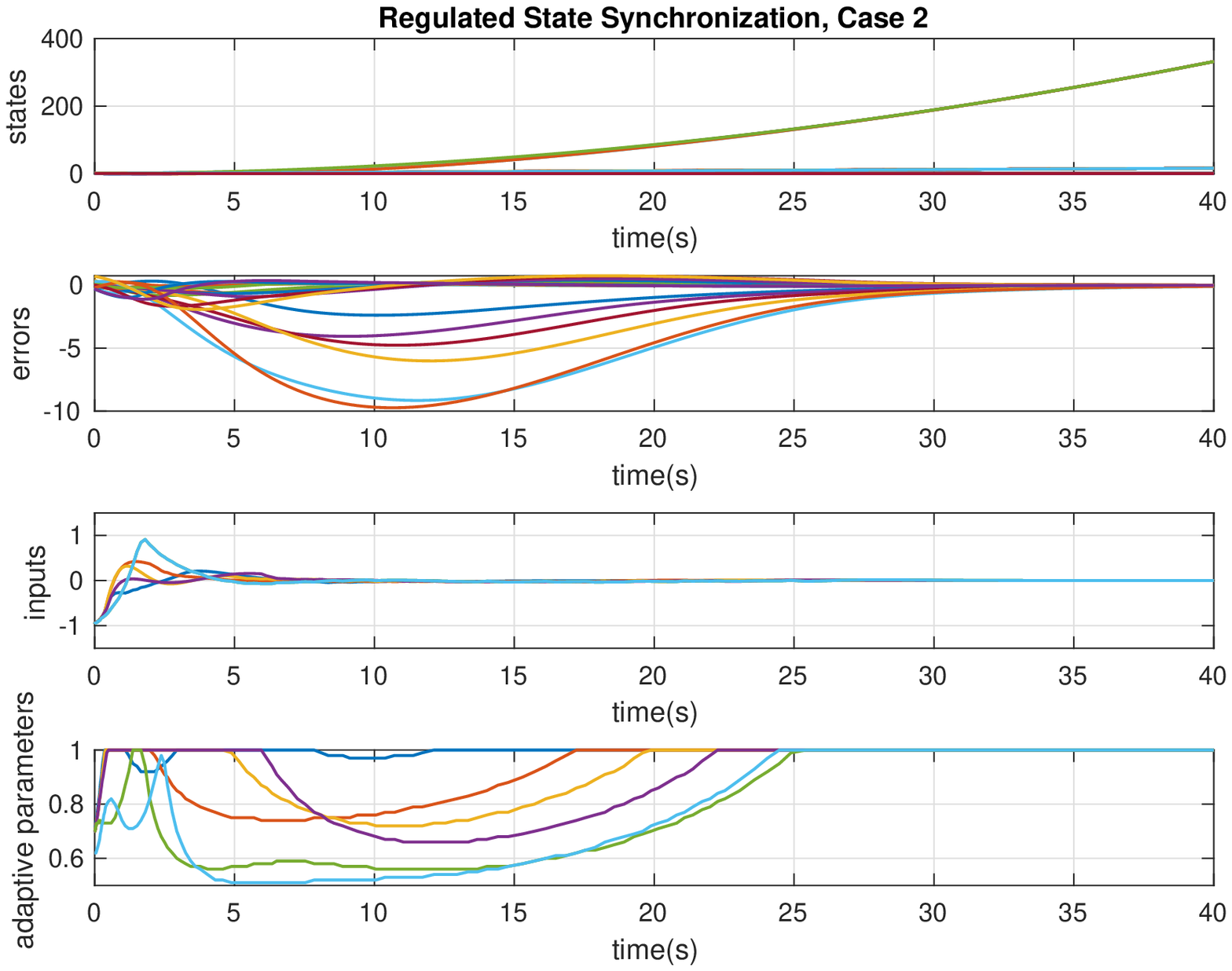}
	\centering
	\vspace*{-1cm}
	\caption{Results for MAS with associated communication graph $2$}\label{results_case22}
\end{figure}

\begin{figure}[t!]
	\includegraphics[width=9cm, height=8cm]{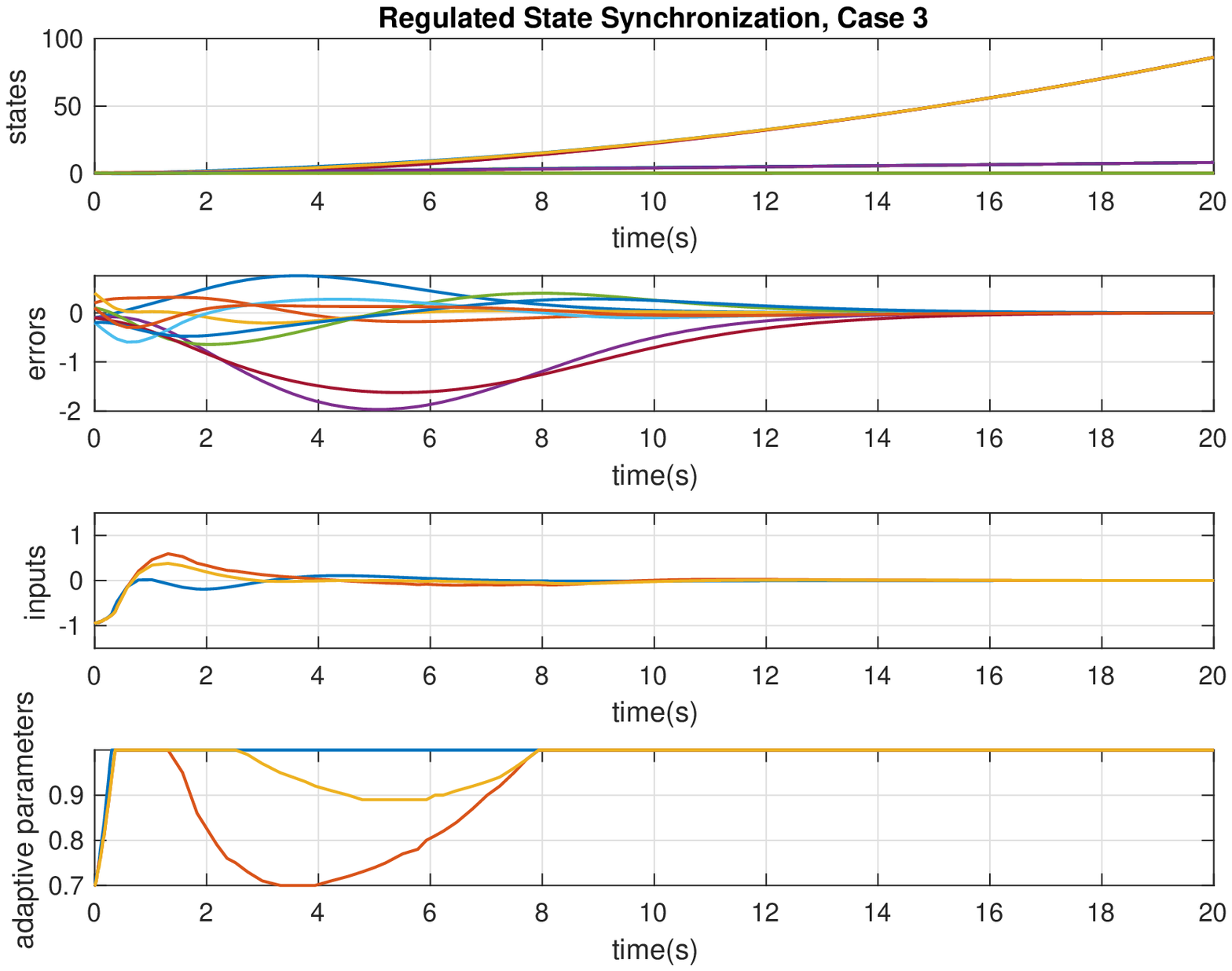}
	\centering
	\vspace*{-1cm}
	\caption{Results for MAS with associated communication graph $3$}\label{results_case33}
\end{figure}

\bibliographystyle{plain}
\bibliography{referenc}
\end{document}